\newtheorem{theorem}{Theorem}[section]
\newtheorem{lemma}[theorem]{Lemma}
\title{ExtendedHyperLogLog: Analysis of a new Cardinality Estimator}
\author{Tal Ohayon \\ tohayonh@gmail.com}
\begin{document}
\date{}
\maketitle
\date{\vspace{-5ex}}
\begin{abstract}
  We discuss the problem of counting distinct elements in a stream. A stream is usually considered as a sequence of elements that come one at a time. An exact solution to the problem requires memory space of the size of the stream. For many applications this solution is infeasible due to very large streams. The solution in that case, is to use a probabilistic data structure (also called sketch), from which we can estimate with high accuracy the cardinality of the stream. We present a new algorithm, ExtendedHyperLogLog (EHLL), which is based on the state-of-the-art algorithm, HyperLogLog (HLL). In order to achieve the same accuracy as HLL, EHLL uses $16\%$ less memory. In recent years, a martingale approach has bean developed. In the martingale setting we receive better accuracy at the price of not being able to merge sketches. EHLL also works in the martingale setting. Martingale EHLL achieves the same accuracy as Martingale HLL using $12\%$ less memory.
\end{abstract}

\section{Introduction}
The problem of counting the number of distinct elements in a stream (also called cardinality), pioneered by Flajolet and Martin in the seminal article from 1985 \cite{DBLP:journals/jcss/FlajoletM85}. The stream consists of a sequence of elements, usually large, so we cannot store it. We get the elements one by one, and would like to be able to count, at any moment, the number of distinct elements that have arrived so far.

The problem of cardinality estimation is very practical and appears in a wide range of industrial applications. Let us mention several such applications. In the field of databases it is used for query optimization and database design \cite{DBLP:journals/jcss/FlajoletM85}. Networking applications compute the number of distinct flows per IP to detect denial-of-service attacks \cite{Estan2006BitmapAF}. A web search engine may compute the number of distinct users that have used a certain word in a query \cite{10.1145/1353343.1353418}.

Most of the algorithms for cardinality estimation consist of a sketch and an estimator. A \emph{sketch} is a probabilistic data structure that keeps some of the information of the stream. The estimator gives an approximation of the cardinality. 
For example, the FM85 sketch, of Flajolet and Martin \cite{DBLP:journals/jcss/FlajoletM85}, is a bit map. For every element that arrives, we take the index of the least significant bit of its hashed value, and fill $'1'$ in the corresponding place in the bitmap. The cells of the bitmap are sometimes called coupons. The estimator of FM85 uses the location of the leftmost $'0'$ to estimate the cardinality.

In recent years, a new kind of algorithms has been developed. These algorithms work in a sequential setting. They use the information from the sequence of sketches, created during the stream processing \cite{DBLP:conf/kdd/Ting14, Cohen}. This method results in much more accurate algorithms, but comes at a price, it cannot be distributed. Thus, we cannot merge sketches from different streams (we will refer to these sketches as \emph{non-mergeable} sketches \cite{pettie2021nonmergeable}). It is common that, when we adjust a sketch from the distributed setting to the sequential setting with the technique of \cite{DBLP:conf/kdd/Ting14, Cohen}, we add the prefix martingale (explained in Section 5).

Another interesting approach is to leave the runtime performance aside and concentrate only on memory consumption. Pettie et al.\ \cite{pettie2020information} designed a new sketch called Fishmonger. Fishmonger is based on a compressed version of FM85. It uses arithmetic coding, which results in linear time to encode and decode the sketch between updates. Thus, Fishmonger may be impractical for most applications.

The state-of-the-art algorithm in the distributed setting, with O(1) update time, is the HyperLogLog (HLL) sketch \cite{Flajolet2007HyperLogLogTA}. HLL keeps only the location of the rightmost $'1'$ in FM85 sketch. In \cite{10.1109/TNET.2020.2970860}, a new algorithm has been suggested, HyperLogLog-TailCut (HLL-TC). HLL-TC uses the minimal counter value of HLL as a base counter, and keeps in each of the other counters the offset from the base counter. Using this technique, \cite{10.1109/TNET.2020.2970860} argued that allocating 4 bits per counter is suffice. In that way the memory usage of HyperLogLog has been reduced.

The problem with this technique is that there may be an offset that requires a memory of more than 4 bits. In that situation, it was suggested to use the maximum value that a counter can store. This property causes the estimation to be dependant on the order of the elements in the stream. Moreover, merging a few sketches can result in an estimation that is different than what would have been received if one sketch was used. For some applications this may not be a problem. In that case we suggest to use the technique of HLL-TC in our algorithm EHLL, named for EHLL-TC. We note that, in order to achieve the same accuracy as HLL-TC, EHLL-TC uses 10\% less memory.

When comparing different algorithms one should take into account variance and memory. We usually consider the relative variance, which is the ratio between the variance and the square of the true cardinality. To reduce the variance, most algorithms make duplicates of the sketch. Namely, they split uniformly the stream into sub-streams, and direct each sub-stream to its corresponding sketch. This method is known as stochastic averaging \cite{DBLP:journals/jcss/FlajoletM85} (We will explain in more details in Section 3). When using $m$ duplicates of the sketch we reduce the variance by a factor of $m$, but increase the memory space by a factor of $m$. Thus, as suggested in \cite{pettie2021nonmergeable}, a fair way to compare sketches is to take their memory variance product (MVP). The MVP of the mergeable sketches that mentioned above described in Table 1. 

\begin{table*}
\centering
\begin{tabular}{ |p{4cm}|p{4.5cm}|p{4cm}|  }
 \hline
 Sketch & MVP & Comments\\
 \hline
 PCSA \cite{DBLP:journals/jcss/FlajoletM85}  & $0.6 \log{U} \approx 38.9$    &\\
 LogLog \cite{loglog} & $1.69 \log{\log{U}} \approx 10.11$ & \\
 HyperLogLog \cite{Flajolet2007HyperLogLogTA}& $1.08 \log{\log{U}} \approx 6.48$ & \\
 ExtendedHyperLogLog & $0.78 (\log{\log{U}} + 1) \approx 5.46$ & New\\
 Fishmonger \cite{pettie2020information}& $1.98$ & Not practical due to updates run time \\
 \hline
\end{tabular}
\caption{MVP of mergeable sketches. Cardinality assumed to be up to $U = 2^{64}$.}
\label{table:1}
\end{table*}

We start by describing the algorithms, FM85, HLL and our EHLL. Next, we dive into a theoretical analysis of EHLL. We provide the martingale setting and its error analysis. Finally, we present simulation results comparing the estimators.  
\section{Notations}
Our algorithm is based on the FM85 sketch \cite{DBLP:journals/jcss/FlajoletM85} and  the HyperLogLog sketch \cite{Flajolet2007HyperLogLogTA}. Thus, we will keep the notations used in these sketches. We denote by $n$ the true cardinality that we want to estimate. Flajolet et al.\ \cite{DBLP:journals/jcss/FlajoletM85} described a method called stochastic averaging. In this method, we split the stream uniformly into smaller streams. We denote by $m$ the number of smaller streams.

In the case of small cardinalities, where the estimation of HLL is less than $\frac{5}{2} m$, another estimation technique is used, based on the sketch that has been constructed, LinearCounting \cite{LinearCounting}. In EHLL, we deal with this case in a similar way.

\section{Algorithms Description}
\subsection{PCSA}
 The Probabilistic Counting with Stochastic Averaging (PCSA) algorithm is based on FM85 sketch \cite{DBLP:journals/jcss/FlajoletM85}. The sketch is a bitmap that initialized with zeroes. For every new element that arrives, we sample from $\text{Geo}(1/2)$ distribution. We use this sampled value as the location to fill $'1'$ in the bitmap. Flajolet and Martin implemented this procedure as follows. For every element they used it's hashed value, this way they could get uniform distribution above binary words. Then, they used the location of the least significant bit in the binary string, denote it as $\rho$. Note that $\rho \sim \text{Geo}(1/2)$. Also note that if an element arrived in the second time, or more, it doesn't affect the sketch. This is very important feature of the sketch because we don't want it to be sensitive to duplicates.
\begin{algorithm}
\caption{FM sketch}\label{euclid}
\begin{algorithmic}[1]
\FOR{\texttt{$i = 0$ to $L - 1$}}
    \STATE $\text{BITMAP}[i] \gets 0$
\ENDFOR
\FOR{\texttt{$v$ in $M$}}
    \STATE $\text{index} \gets \rho (\text{hash}(v))$
    \IF{$\text{BITMAP}[\text{index}] = 0$}
        \STATE $\text{BITMAP}[\text{index}] \gets 1$
    \ENDIF
\ENDFOR
\end{algorithmic}
\end{algorithm}

Flajolet and Martin propose to use the left most zero in the bitmap as an indicator of $\log_2{n}$. Denote this value as $R$, they showed that $\mathbb{E}[R] \approx \log_2{\phi n}, \hspace{0.1cm} \phi \approx 0.77351$. The approximation of $n$ using $R$ yields typical error of one binary order which is too high for many applications \cite{DBLP:journals/jcss/FlajoletM85}.   

In order to reduce the error, they suggest to duplicate the sketch into $m$ sketches. When a new element arrived, one of the bitmaps chosen randomly and updated in the same way as in Algorithm 1. They implemented it the following way. They used the first $\log_2{m}$ bits of the element hashed value, as the address of the bitmap to be updated (this way the same element cannot be inserted to two different bitmaps). Then, they used the other bits to perform Algorithm 1. The cardinality is approximated with the average $R$ values of the $m$ bitmaps. Flajolet and Martin called this procedure stochastic averaging. With this procedure the relative error reduced to $0.78/ \sqrt{m}$.

\subsection{HyperLogLog}
HyperLogLog \cite{Flajolet2007HyperLogLogTA} keeps an array of small counters of size 6 bits (in order to estimate cardinalities up to $2^{64}$). For every element that arrives, HyperLogLog chooses randomly a cell in the array and sample from a  $\text{Geo}(1/2)$ distribution. If the sampled number is bigger than the current value in the cell, then the cell is updated with the sampled value. The way of choosing the cell and generating the $\text{Geo}(1/2)$ - variable is described in Algorithm 2 (same as PCSA). 

Let $v$ be the next new element in the stream. HyperLogLog chooses randomly a cell, so it has a probability of $\frac{1}{m}$ to fall into cell $j$. Given that it felt into cell $j$, the probability that cell $j$ will be updated is $2^{-C[j]}$, which is the probability of generating a $\text{Geo}(1/2)$ - variable that is bigger than the current value in cell $j$. Thus, the probability that the sketch will be updated is $\frac{1}{m} \left( \sum_{j = 1}^m 2^{-C[j]}\right)$.
HyperLogLog uses the indicator $Z = \left( \sum_{j = 1}^m 2^{-C[j]}\right)^{-1}$ in order to estimate the cardinality. Note that, in the arrival of a new element, the probability that the sketch will be updated equals $\frac{1}{m Z}$. Flajolet et al.\ \cite{Flajolet2007HyperLogLogTA} calculated $\mathbb{E}[Z]$ in order to get the bias correction constant $\alpha_m$. They also showed that its relative error is $1.04/\sqrt{m}$.
\begin{algorithm}
\caption{HyperLogLog}\label{euclid1}
\begin{algorithmic}[1]
\STATE $C \gets$ array of size $m = 2^b$, where $b \in \mathbb{N}$, initialized with zeroes.
\FOR{$v$ in $M$}
    \STATE $x \gets hash(v)$
    \STATE $j \gets$ first $b$ bits of $x$
    \STATE $y \gets$ the rest of the bits of $x$
    \STATE $C[j] \gets \max \{C[j], \rho(y)\}$, where $\rho$ gives the location of the least significant bit.
\ENDFOR
\STATE $Z \gets$ $\left( \sum_{j = 1}^m 2^{-C[j]}\right)^{-1}$
\RETURN $\alpha_m m^2 Z$ 
\end{algorithmic}
\end{algorithm}

Note that each counter saves the location of the right most $'1'$ in FM85 sketch. The location of the right most $'1'$ is less informative than the location of the left most $'0'$. For the left most $'0'$ we have to save the whole bitmap, while for the right most $'1'$ we always keep the maximum value of a $\text{Geo}(1/2)$ sample. In order to save the location of the left most $'0'$ we need $O(\log{n})$ bits, and to save the location of the right most $'1'$ we need $O(\log(\log{n}))$ bits. Thus, when the two sketches have the same amount of memory, HyperLogLog achieves better accuracy.

\subsection{ExtendedHyperLogLog}
The ExtendedHyperLogLog algorithm may be seen as a compromise between HyperLogLog and FM85. It keeps the information about the location of the right most $'1'$ ($C_1$, array of integers), such as HyperLogLog. The additional information from FM85, is the state of the location, left to the right most $'1'$ in FM85 sketch ($C_2$, array of bits). For example, let $j \in [m]$, and assume that the location of the right most $'1'$ in FM85 sketch is $k$, then $C_1[j] = k$. If, at least one of the $\text{Geo}(1/2)$ observations equals $k - 1$, then $C_2[j] = 1$. Otherwise, it will be zero. The updates of the ExtendedHyperLogLog sketch describe in Algorithm 3.

The estimation function works in a similar way as in HyperLogLog. Consider the moment that we want to estimate the number of distinct elements. We will use the probability that the sketch will be changed in the arrival of a new element. In that case, the probability that the element felt into cell $j$ is $\frac{1}{m}$ and it has probability of $2^{-C_1[j]}$ to be greater than $C_1[j]$. If $C_2[j] = 1$, then the probability that this row will be changed is $2^{-C_1[j]}$. If $C_2[j] = 0$, then we have to also include the probability that the value generated by the new element equals $C_1[j] - 1$. Because this case also changes the sketch. This occurs with probability $2^{- C_1[j] + 1}$. Overall, we got that the probability that the sketch will be changed is $q = \frac{1}{m} \left( \sum_{j = 1}^m 2^{-C_1[j]} + (1 - C_2[j])2^{-C_1[j] + 1}\right)$. We define  $Y = \frac{1}{m q}$. In the next section we will analyse the expectation and variance of $Y$, in order to get the bias correction constant $\gamma_m$.

\begin{algorithm}
\caption{ExtendedHyperLogLog}\label{euclid3}
\begin{algorithmic}[1]
\STATE $C_1 \gets$ array of size $m = 2^b$ where $b \in \mathbb{N}$, initialized with zeroes.
\STATE $C_2 \gets$ bitmap array of size $m = 2^b$, initialized with ones.
\FOR{\texttt{$v$ in $M$}}
    \STATE $x \gets hash(v)$
    \STATE $j \gets$ first $b$ bits of $x$
    \STATE $y \gets$ the rest bits of $x$
    \IF{$\rho(y) =  C_1[j] + 1$} 
        \STATE $C_1[j] \gets \rho(y)$
        \STATE $C_2[j] \gets 1$
    \ELSIF{$\rho(y) >  C_1[j] + 1$} 
        \STATE $C_1[j] \gets \rho(y)$
        \STATE $C_2[j] \gets 0$
    \ELSIF{$\rho(y) =  C_1[j] - 1$ \AND $C_2[j] = 0$} 
        \STATE $C_2[j] \gets 1$
    \ENDIF
\ENDFOR
\STATE $Y \gets$ $\left( \sum_{j = 1}^m 2^{-C_1[j]} + (1 - C_2[j])2^{-C_1[j] + 1}\right)^{-1}$
\RETURN $\gamma_m m^2 Y$ 
\end{algorithmic}
\end{algorithm}

\newpage
\section{Mean and Variance Analysis}
As mentioned in the previous section, the variable which we estimate by the number of distinct elements is $Y$, where $\frac{1}{m Y}$ is the probability that the sketch will be updated in the next arrival of a new element. In the case of HLL we will refer to this variable as $Z$.

All left to do before writing the exact expression for the expectation is to calculate the probability that in cell $j$, in HLL sketch, the number is $k$, given that $n_j$ distinct elements arrived at this cell. Since the number in the cell is the maximum of $n_j$ random variables $\text{Geo}(1/2)$ we get that $P(C[j] = k) = \left(1 - \frac{1}{2^k}\right)^{n_j} - \left(1 - \frac{1}{2^{k - 1}}\right)^{n_j}$, for $k \geq 1$.
 
Flajolet et al.\cite{Flajolet2007HyperLogLogTA} wrote the exact expression for the expectation of $Z$ in the following way
\begin{equation}
\mathbb{E}[Z] = \sum_{k_1,...,k_m \geq 1} \frac{1}{\sum_{j = 1}^m 2^{-k_j}} \sum_{n_1 + ...+n_m = n} \binom{n}{n_1, ...,n_m} \frac{1}{m^n} \prod_{j = 1}^m \gamma_{n_j,k_j}
\end{equation}
where
\begin{equation}
\gamma_{n_j,k_j} = \left(1 - \frac{1}{2^{k_j}}\right)^{n_j} - \left(1 - \frac{1}{2^{k_j - 1}}\right)^{n_j}.
\end{equation}
\\

In the case of EHLL, we have the addition of the coupon to the left of the right most coupon that has been collected, which yields
\begin{equation} \label{eq: Ey}
\begin{split}
    \mathbb{E}[Y] &= \sum_{x \in \{0, 1\}^m} \sum_{k_1,...,k_m \geq 1} \left[\frac{1}{\sum_{j = 1}^m 2^{-k_j} + (1 - x_j)2^{-k_j + 1}} \right. \\
    & \hspace{3cm} \left.\sum_{n_1 + ...+n_m = n} \binom{n}{n_1, ...,n_m} \frac{1}{m^n} \prod_{j = 1}^m \phi_{n_j,k_j, x_j} \right]
\end{split}
\end{equation}
where
\begin{equation} \label{eq: phi}
\phi_{n_j,k_j, x_j} = \gamma_{n_j,k_j} \left(x_j + (-1)^{x_j} \left(1 - \frac{1}{2^{k_j - 1}}\right)^{n_j}\right).
\end{equation}

We know that the expectation of $Y$ doesn't equal to the cardinality of the stream. We do know that the expectation depends on the cardinality $n$ and on the number of duplicates of the sketch, which is $m$. In the following we will calculate this relation. After having a closed form of these variables we will be able to calculate the cardinality based on the value of $Y$ at any moment of the stream.
\begin{theorem}
The variable $Y$ has the following behaviour as $n \rightarrow \infty$
\begin{equation} \label{eq:theorem4}
\mathbb{E}_n[Y] = \frac{n}{m} \left( \int_0^\infty f(u)^m du + \frac{\epsilon}{m} + o(1)\right)
\end{equation}
where $f(u) = \log_2{\frac{2 + u}{1 + u}} + \log_2{\frac{4 + 3u}{3 + 3u}} - \log_2{\frac{4 + u}{3 + u}}$
and $|\epsilon| \leq 8.4 \cdot 10^{-5}$.

\end{theorem}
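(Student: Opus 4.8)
The plan is to follow the classical Poissonization–Mellin strategy of Flajolet et al., adapted to the extra coupon tracked by $C_2$. The multinomial constraint $n_1+\cdots+n_m=n$ in (\ref{eq: Ey}) couples the cells, so I would first pass to the Poisson model, in which the total number of elements is $\mathrm{Poisson}(n)$; then the per-cell counts become independent, each $\mathrm{Poisson}(\lambda)$ with $\lambda=n/m$. Writing $s_j=2^{-C_1[j]}+(1-C_2[j])2^{-C_1[j]+1}$ and $S=\sum_j s_j$, so that $Y=1/S$, I would use the integral representation
\begin{equation}
\frac1S=\int_0^\infty e^{-tS}\,dt,
\end{equation}
which, together with independence of the cells, factorizes the Poissonized expectation as
\begin{equation}
\widetilde{\mathbb{E}}[Y]=\int_0^\infty \psi(t)^m\,dt,\qquad \psi(t)=\mathbb{E}\!\left[e^{-t s_1}\right].
\end{equation}
A depoissonization argument (analytic depoissonization, or a direct concentration bound on the Poisson total around $n$) then shows $\mathbb{E}_n[Y]=\widetilde{\mathbb{E}}[Y]+o(n/m)$, which supplies the $o(1)$ term in (\ref{eq:theorem4}).

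Next I would compute the single-cell transform $\psi(t)$. By Poisson thinning, the number of elements producing $\rho=k$ is $\mathrm{Poisson}(\lambda 2^{-k})$, independently over $k$, so coupon level $k$ is occupied with probability $1-e^{-\lambda 2^{-k}}$, independently across levels. Since $s_1=2^{-C_1}(3-2C_2)$, with $C_1$ the top occupied level and $C_2$ the indicator that level $C_1-1$ is occupied, I would obtain (for $k\ge1$, the empty/boundary cases being asymptotically negligible)
\begin{align}
P(C_1=k,\,C_2=1)&=(1-e^{-\lambda 2^{-k}})(1-e^{-2\lambda 2^{-k}})e^{-\lambda 2^{-k}},\\
P(C_1=k,\,C_2=0)&=(1-e^{-\lambda 2^{-k}})e^{-3\lambda 2^{-k}},
\end{align}
and hence express $\psi(t)$ as an explicit sum over $k$ of $e^{-t2^{-k}}$ and $e^{-3t2^{-k}}$ weighted by these probabilities.

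I would then rescale by $t=\lambda\tau$, giving $\widetilde{\mathbb{E}}[Y]=\frac nm\int_0^\infty\psi(\lambda\tau)^m\,d\tau$. Setting $a_k=\lambda 2^{-k}$, the summand depends on $k$ only through $a_k$, and as $\lambda\to\infty$ the geometric sum over $k$ converges to the Mellin-type integral
\begin{equation}
\frac{1}{\ln 2}\int_0^\infty\Big[e^{-\tau a}(1-e^{-a})(1-e^{-2a})e^{-a}+e^{-3\tau a}(1-e^{-a})e^{-3a}\Big]\frac{da}{a}.
\end{equation}
Expanding the bracket into a combination of exponentials whose coefficients sum to zero and applying Frullani's formula $\int_0^\infty\frac{e^{-\alpha a}-e^{-\beta a}}{a}\,da=\ln\frac\beta\alpha$, this integral evaluates to exactly $f(\tau)=\log_2\frac{2+\tau}{1+\tau}+\log_2\frac{4+3\tau}{3+3\tau}-\log_2\frac{4+\tau}{3+\tau}$, recovering the stated $f$. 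Thus $\psi(\lambda\tau)\to f(\tau)$ and the leading term of $\widetilde{\mathbb{E}}[Y]$ is $\frac nm\int_0^\infty f(\tau)^m\,d\tau$.

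The main obstacle is controlling the gap between the sum over $k$ and its integral approximation, which is a bounded periodic function of the fractional part of $\log_2\lambda$. I would express this remainder via the Mellin transform through the residues at the poles $s=\frac{2\pi i \ell}{\ln 2}$ on the imaginary axis, whose Gamma-factor magnitudes decay like $e^{-\pi^2|\ell|/\ln2}$ and sum to an amplitude below $8.4\cdot10^{-5}$. The reason this appears divided by $m$ is that $\psi(0)=f(0)=1$ exactly, so the oscillation $\eta(\tau)=\psi(\lambda\tau)-f(\tau)$ vanishes at $\tau=0$; since $f(\tau)=1-\frac{2}{3\ln2}\tau+O(\tau^2)$, the factor $f(\tau)^m$ concentrates on the scale $\tau\sim 1/m$, so that $\int_0^\infty\psi(\lambda\tau)^m\,d\tau=\int_0^\infty f(\tau)^m\,d\tau+\frac{\epsilon}{m}+o(1/m)$ with $|\epsilon|\le 8.4\cdot10^{-5}$, and multiplying by $n/m$ yields (\ref{eq:theorem4}). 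The quantitative Mellin/Fourier bound pinning the constant $8.4\cdot10^{-5}$, together with the depoissonization legitimizing the factorized integral, are the delicate parts; the algebraic evaluation in Step 3 and the case analysis in Step 2 are routine by comparison.
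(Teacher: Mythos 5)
Your proposal is correct, and at the top level it follows the same route as the paper: Poissonization, the identity $1/S=\int_0^\infty e^{-tS}\,dt$, factorization of the Poissonized expectation into the $m$-th power of a single-cell transform, Mellin/harmonic-sum asymptotics identifying the limit $f(u)$, and finally an oscillation-plus-concentration argument producing the $\epsilon/m+o(1)$ error. The genuine difference is where the integrand comes from. The paper starts from its fixed-$n$ expression (\ref{eq: Ey})--(\ref{eq: phi}), in which the event $\{C_1[j]=k\}$ and the occupancy of level $k-1$ enter in product form, and Poissonizes it; this yields the integrand $G(\lambda,t)$ carrying the parasitic terms $x/2^{2k}$, whose removal costs the entire first half of the proof of Lemma 4.3 (the $S_1,S_2$ splitting) plus the two appendix lemmas. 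You instead compute the per-cell law exactly in the Poisson model, where thinning makes the level occupancies independent: your probabilities $(1-e^{-a})(1-e^{-2a})e^{-a}$ and $(1-e^{-a})e^{-3a}$ (with $a=\lambda 2^{-k}$) are correct, and your $\psi(t)$ expands to precisely the paper's $G_2(\lambda,t)$, so the $x/2^{2k}$ terms never appear; your Frullani evaluation of the residue at $s=0$ then gives exactly the stated $f$. Your route therefore makes the $G$-versus-$G_2$ comparison unnecessary --- a real simplification --- and it incidentally exposes that the paper's ``exact'' formula (\ref{eq: phi}) is itself a product approximation (with $n_j=1$ it assigns positive probability to $\{C_1[j]=k,\,C_2[j]=1\}$ for $k\ge 2$, which a single element cannot produce); what the paper's route buys is that it can reuse the HyperLogLog machinery essentially verbatim.

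Two caveats, neither fatal. First, the bookkeeping of the constant: the Mellin/Fourier oscillation amplitude is bounded by $2.1\cdot 10^{-5}$ (the paper's Lemma 4.2), and the theorem's $8.4\cdot 10^{-5}$ is that amplitude times $4$, the factor produced by exactly the mechanism you describe, since the error for $u\le 1$ is linear in $u$ and $m\int_0^1 u(1-u/2)^{m-1}du\to 4$. So you cannot take $8.4\cdot 10^{-5}$ as the amplitude bound and then integrate it against the concentration scale a second time; the factor of $4$ is accounted for only once. Second, the depoissonization you invoke (as does the paper, via Proposition 3 of the HyperLogLog paper) relies on the monotonicity of $Y$ along the stream; this does hold here --- every update strictly decreases $s_j$, including the transition from $C_2[j]=0$ to $C_2[j]=1$ --- but it deserves a sentence in a full write-up.
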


Due to the similarity of the sketches, many of the steps and the mathematical tools in the analysis of $Z$ \cite{Flajolet2007HyperLogLogTA} can be used to analyze $Y$. Thus, in order to make this paper self contained, we repeat these steps with some little modifications that relates to the differences between $Z$ and $Y$. The analysis consists mathematical tools such as poissonization, Mellin transform and Laplace integrals. We will review these tools through the paper. The proof consists of 3 main steps:
\begin{enumerate}
    \item Use poissonization to simplify the exact expression of the expectation (\ref{eq: Ey}). This means that we analyze the variable $Y$ under the assumption that $n \sim Pois(\lambda)$. We will get that $E_{\mathcal{P}(\lambda)}[Y] = \int_0^\infty G(\lambda,t)^m dt$, where $G(\lambda,t)$ is a sum of elements. 
    \item Approximate the integrand. We can't calculate directly the sum $G(\lambda,t)$. So we define a new sum $G_2(\lambda, t)$.
    We define $\lambda = mx$ and $t = ux$. Then, we use Mellin transform to get that $G_2(mx, ux) \approx f(u)$.  
    \item Calculate the error which results from using $f(u)$ instead of $G(mx, ux)$.
\end{enumerate}

\subsection{Poisson Model}
In the poissonization process we calculate the expectation with the assumption that the number of distinct elements distributed $\text{Pois}(\lambda)$. Denote this sum as $\mathbb{E}_{\mathcal{P}(\lambda)}[Y]$. When taking $\lambda = n$ we can get that (the proof similar to Proposition 3 in \cite{Flajolet2007HyperLogLogTA})
\begin{equation} \label{eq:7}
    \mathbb{E}_n(Y) = \mathbb{E}_{\mathcal{P}(n)}[Y] + O(1)
\end{equation}
Poissonization enable us to use Mellin transform in the next steps of the analysis \cite{loglog}.

Under The Poisson model
\begin{equation}
  \mathbb{E}_{\mathcal{P}(\lambda)}[Y] = \sum_{n \geq 0} E_n[Y] e^{-\lambda}\frac{\lambda^n}{n!}  
\end{equation}
A few calculations yields
\begin{align*}
    &\sum_{n \geq 0} \sum_{n_1 + ...+n_m = n}e^{-\lambda} \frac{\lambda^n}{n!} \binom{n}{n_1, ...,n_m} \frac{1}{m^n} \prod_{j = 1}^m \phi_{n_j,k_j, x_j} \\
    & \hspace{0.5cm}= \sum_{n \geq 0} \sum_{n_1 + ...+n_m = n} \prod_{j = 1}^m e^{-\frac{\lambda}{m}} \frac{\left(\frac{\lambda}{m}\right)^{n_j} \phi_{n_j,k_j, x_j}}{n_j!} \\
    & \hspace{0.5cm} = \prod_{j = 1}^m \left[e^{-\frac{\lambda}{m}} \sum_{n_j \geq 0} \frac{\left(\frac{\lambda}{m}\right)^{n_j} \phi_{n_j,k_j, x_j}}{n_j!} \right] \\
    & \hspace{0.5cm} = \prod_{j = 1}^m g(x_j, k_j, \lambda)
\end{align*}
where
\begin{equation}
\begin{split}
    g(x_j, k_j, \lambda) &= x_j\left(e^{-\frac{\lambda}{m2^{k_j}}} - e^{-\frac{\lambda}{m2^{k_j - 1}}}\right) \\
    &\hspace{1cm}+ (-1)^{x_j}\left(e^{-\frac{\lambda}{m}\left(\frac{3}{2^{k_j}} - \frac{2}{2^{2k_j}}\right)} - e^{-\frac{\lambda}{m}\left(\frac{4}{2^{k_j}} - \frac{4}{2^{2k_j}}\right)}\right).
\end{split}
\end{equation}
After using the above we get
\begin{equation} \label{eq:12}
\begin{split}
      \mathbb{E}_{\mathcal{P}(\lambda)}[Y] = \sum_{x \in \{0, 1\}^m} \sum_{k_1,...,k_m \geq 1} \frac{1}{\sum_{j = 1}^m 2^{-k_j} + (1 - x_j)2^{-k_j + 1}} \prod_{j = 1}^m g(x_j, k_j, \lambda),
\end{split}
\end{equation}

In order to continue with the simplification of (\ref{eq:12}) we use the identity
\begin{equation}
  \frac{1}{a} = \int_0^\infty e^{-at}dt.  
\end{equation}
We get this separation of variables
\begin{align*}
    \mathbb{E}_{\mathcal{P}(\lambda)}[Y] &= \sum_{x \in \{0, 1\}^m} \sum_{k_1,...,k_m \geq 1} \int_0^\infty \left( e^{-t(\sum_{j = 1}^m 2^{-k_j} + (1 - x_j)2^{-k_j + 1})} \prod_{j = 1}^m g(x_j, k_j, \lambda) \right) dt \\
    &= \int_0^\infty G(\lambda, t)^m dt,
\end{align*}
Where
\begin{align*}
G(\lambda, t) &= \sum_{k \geq 1} \left[\left(e^{-\frac{\lambda}{m2^{k}}} - e^{-\frac{\lambda}{m2^{k - 1}}}\right)e^{-t2^{-k}} \right. \\
&\hspace{1cm}+ \left. \left(e^{-\frac{\lambda}{m}\left(\frac{3}{2^{k}} - \frac{2}{2^{2k}}\right)} - e^{-\frac{\lambda}{m}\left(\frac{4}{2^{k}} - \frac{4}{2^{2k}}\right)}\right) (e^{-3t2^{-k}}-e^{-t2^{-k}}) \right]. 
\end{align*}
For convenience we substitute $\lambda = mx$ to obtain
\begin{equation}
  \mathbb{E}_{\mathcal{P}(mx)}[Y] = \int_0^\infty G(mx, t)^m dt.  
\end{equation}
We will make another change of variables $t = xu$. Then
\begin{equation} \label{eq:13}
    \mathbb{E}_{\mathcal{P}(x)}[Y] = H\left(\frac{x}{m}\right), \hspace{1cm} H(x) = x\int_0^\infty G(mx, ux)^m du,
\end{equation}
while
\begin{equation}
\begin{split} \label{eq:G(mx, ux)}
G(mx, ux) &= \sum_{k \geq 1} \biggl[\left(e^{-\frac{x}{2^{k}}} - e^{-\frac{x}{2^{k - 1}}}\right)e^{-ux2^{-k}} \\
& \hspace{1cm} + \left(e^{-x\left(\frac{3}{2^{k}} - \frac{2}{2^{2k}}\right)} - e^{-x\left(\frac{4}{2^{k}} - \frac{4}{2^{2k}}\right)}\right) \\
& \hspace{1cm} \cdot (e^{-3ux2^{-k}}-e^{-ux2^{-k}}) \biggr].
\end{split}
\end{equation}

\subsection{Approximation of the Integrand}
In order to calculate the sum $G(mx, ux)$, we use the Mellin transform. 
Mellin transform defined as
\begin{equation}
    \mathcal{M}[f(x); s] = f^*(s) = \int_0^\infty f(x)x^{s - 1} dx.
\end{equation}
We would like to use the property that the transform, for the so called harmonic sums \cite{Flajolet1985SomeUO}
\begin{equation}
  F(x) = \sum_k \lambda_k f(\mu_k x), 
\end{equation}
factorizes as
\begin{equation}
  F^*(s) = \left( \sum_k \lambda_k \mu_k^{-s} \right)f^*(s).  
\end{equation}

To this end we need to write the terms in the sum on the right hand side of (\ref{eq:G(mx, ux)}) as a function $f(\mu_k x)$. Unfortunately we have terms of the shapes $\frac{x}{2^k}$ and $\frac{x}{2^{2k}}$. We omit the terms $\frac{x}{2^{2k}}$, and in the next section we will prove that their contribution can be neglected.
Denote as 
\begin{equation} \label{eq:20}
\begin{split}
G_2(mx, ux) &= \sum_{k \geq 1} \left[\left(e^{-\frac{x}{2^{k}}} - e^{-\frac{x}{2^{k - 1}}}\right)e^{-ux2^{-k}} \right.\\
&\hspace{1cm} + \left. \left(e^{-\frac{3x}{2^{k}}} - e^{-\frac{4}{2^{k}}}\right)  (e^{-3ux2^{-k}}-e^{-ux2^{-k}}) \right].
\end{split}
\end{equation}

Using Lemma 1 from \cite{Flajolet2007HyperLogLogTA}, which is the analysis of the first term on the right hand side of (\ref{eq:20}), that uses the Mellin transform, we get the following lemma.
\begin{lemma}
For arbitrary fixed $u > 0$, the function $G_2(mx, ux)$ has the asymptotic behaviour
\begin{equation}
    G_2(mx, ux) = 
\begin{cases}  
    f(u)(1 + O(x^{-1})) + u\epsilon,& u \leq 1, \\
    f(u)(1 + \epsilon + O(x^{-1})),& u > 1,
\end{cases}    
\end{equation}
as $x \rightarrow \infty$, where
\begin{equation}
f(u) = \log_2{\frac{2 + u}{1 + u}} + \log_2{\frac{4 + 3u}{3 + 3u}} - \log_2{\frac{4 + u}{3 + u}},
\end{equation}
and $|\epsilon| \leq 2.1 \cdot 10^{-5}$.
\end{lemma}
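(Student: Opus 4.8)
The plan is to recognize $G_2(mx,ux)$ as a sum of two harmonic sums built on the common scale sequence $\mu_k = 2^{-k}$ and to treat them by the Mellin-transform method already set up in the text. Writing the first bracketed summand as $\sum_{k\ge 1} g_1(2^{-k}x)$ with $g_1(x) = (e^{-x}-e^{-2x})e^{-ux}$, I observe that this is exactly the sum analysed in Lemma~1 of \cite{Flajolet2007HyperLogLogTA}, so I would invoke that lemma verbatim: it contributes the main term $\log_2\frac{2+u}{1+u}$, a bounded periodic fluctuation, and an $O(x^{-1})$ remainder, and it is also the origin of the $u\le 1$ versus $u>1$ case split. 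The remaining work is to run the same analysis on the second term $\sum_{k\ge1} g_2(2^{-k}x)$ with $g_2(x) = (e^{-3x}-e^{-4x})(e^{-3ux}-e^{-ux})$.

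For this term I would first compute the Mellin transform of the base function. Expanding $g_2(x) = e^{-(3+3u)x} - e^{-(3+u)x} - e^{-(4+3u)x} + e^{-(4+u)x}$ and using $\mathcal{M}[e^{-ax};s] = a^{-s}\Gamma(s)$ gives
\begin{equation*}
g_2^*(s) = \Gamma(s)\bigl[(3+3u)^{-s} - (3+u)^{-s} - (4+3u)^{-s} + (4+u)^{-s}\bigr].
\end{equation*}
By the harmonic-sum property the transform of the full second term is $\frac{2^s}{1-2^s}\,g_2^*(s)$, valid in the fundamental strip $\langle -2, 0\rangle$ (the lower edge from $g_2(x)=O(x^2)$ as $x\to 0$, the upper edge from the geometric factor $\sum_k 2^{ks}$, convergent for $\mathrm{Re}(s)<0$). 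I would then recover the asymptotics by Mellin inversion, shifting the inversion contour rightward out of the strip.

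The key computation is the residue at $s=0$. There $\frac{2^s}{1-2^s}\Gamma(s)$ has a double pole behaving like $-\tfrac{1}{s^2\ln 2}$, while the bracket vanishes to first order (its coefficients sum to $1-1-1+1=0$) with derivative $\ln\frac{(3+u)(4+3u)}{(3+3u)(4+u)}$ at $s=0$; the product therefore has only a simple pole, whose residue is $-\bigl(\log_2\frac{4+3u}{3+3u} - \log_2\frac{4+u}{3+u}\bigr)$, so that $-\mathrm{Res}$ contributes $\log_2\frac{4+3u}{3+3u} - \log_2\frac{4+u}{3+u}$. Adding the $\log_2\frac{2+u}{1+u}$ inherited from the first term yields precisely $f(u)$. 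Since $\Gamma(s)$ is pole-free and the bracket entire for $\mathrm{Re}(s)>0$, the only further singularities are the simple poles of $\frac{2^s}{1-2^s}$ at $\chi_\ell = \frac{2\pi i \ell}{\ln 2}$, $\ell\ne 0$, all lying on the imaginary axis; their residues assemble into the oscillating term, and shifting the contour to $\mathrm{Re}(s)=1$ (crossing no poles) bounds the remaining integral by $O(x^{-1})$.

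The main obstacle, exactly as in the original HLL analysis, is the \emph{quantitative} control of the periodic fluctuation. I would bound $\sum_{\ell\ne0}\lvert\Gamma(\chi_\ell)\,B(\chi_\ell)\rvert$, where $B$ denotes the bracket, using the rapid exponential decay of $\Gamma$ along the imaginary axis, and combine this with the corresponding estimate from \cite{Flajolet2007HyperLogLogTA} to certify $\lvert\epsilon\rvert\le 2.1\cdot 10^{-5}$. This same estimate forces the two cases of the statement: for $u>1$ the fluctuation is naturally measured relative to $f(u)$, giving the multiplicative form $f(u)(1+\epsilon)$, whereas for $u\le 1$, where $f(u)$ need not dominate the pole contributions, it is cleaner to record the fluctuation additively as $u\epsilon$ (the factor $u$ reflecting the $O(u)$ behaviour of $B(\chi_\ell)$ for small $u$). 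Verifying that each bookkeeping choice is valid with the stated constant is the delicate part; by comparison the residue calculus producing $f(u)$ is routine.
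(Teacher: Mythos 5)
Your proposal is correct and takes essentially the same route as the paper: the paper's own (very terse) justification of this lemma is precisely to invoke Lemma~1 of \cite{Flajolet2007HyperLogLogTA} for the first harmonic sum and to apply the identical Mellin/residue machinery to the second sum, a computation the paper only writes out explicitly for a cognate sum in its appendix (Lemma~\ref{bounding s lemma}). Your residue calculation at $s=0$ (yielding $\log_2\frac{4+3u}{3+3u}-\log_2\frac{4+u}{3+u}$), the pole structure at $\frac{2\pi i \ell}{\ln 2}$, the contour shift to $\Re(s)=1$, and the $u\le 1$ versus $u>1$ bookkeeping all match that treatment — indeed you supply more detail than the paper itself does.
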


\subsection{Integrand Error Analysis}
All left to do is to bound the gap between $\int_0^\infty f(u)^m du$\\ and $\int_0^\infty G(mx, ux)^m du$. 
\begin{lemma}
The function $\int_0^\infty G(mx, ux) du$ has the following asymptotic behavior as $x \rightarrow \infty$:
\begin{equation}
\int_0^\infty G(mx, ux)^m du = \int_0^\infty f(u)^m du + \frac{\epsilon}{m} + o(1)
\end{equation}
where $|\epsilon| \leq 2.1 \cdot 10^{-5}$.
\end{lemma}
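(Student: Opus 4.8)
The plan is to first replace $G$ by the simpler $G_2$, then feed in the asymptotics of $G_2$ from the previous lemma, and finally integrate while keeping track of the two different error regimes $u\le 1$ and $u>1$. For the first reduction, note that $G(mx,ux)$ and $G_2(mx,ux)$ differ only through the omission of the $x/2^{2k}$ terms inside the exponents of the second summand. Applying $|e^{-a}-e^{-b}|\le|a-b|$ for $a,b\ge 0$, each affected factor changes by at most $2x/2^{2k}$ (respectively $4x/2^{2k}$), while the companion factor $e^{-3ux2^{-k}}-e^{-ux2^{-k}}$ is bounded and only appreciable when $2^k\asymp x$. Summing over $k$, the dominant contribution sits at $k\approx\log_2 x$, where $x/2^{2k}\asymp 1/x$, so $|G(mx,ux)-G_2(mx,ux)|=O(x^{-1})$ with an integrable tail in $u$. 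Since $0\le G,G_2\le 1$, the factorization $a^m-b^m=(a-b)\sum_{i=0}^{m-1}a^i b^{m-1-i}$ then gives $\int_0^\infty G^m\,du=\int_0^\infty G_2^m\,du+o(1)$.

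Next I would write $G_2(mx,ux)=f(u)+r(u,x)$, where the previous lemma guarantees $|r|\le u\epsilon+f(u)O(x^{-1})$ on $u\le 1$ and $|r|\le f(u)(\epsilon+O(x^{-1}))$ on $u>1$, with $|\epsilon|\le 2.1\cdot10^{-5}$. Expanding $G_2^m=(f+r)^m=\sum_{i=0}^m\binom{m}{i}f^{m-i}r^i$, the $i=0$ term reproduces $\int_0^\infty f(u)^m\,du$, so the task reduces to bounding the $i\ge 1$ contributions. The range $u>1$ is harmless: $f$ is strictly decreasing with $f(1)=\log_2\tfrac{7}{5}<1$ and $f(u)\sim\tfrac{1}{3u\ln 2}$ as $u\to\infty$, so $\int_1^\infty f(u)^m\,du=O(f(1)^m)$ is exponentially small in $m$; each error term there is $O\!\left(m\,\epsilon\,f(u)^m\right)+O\!\left(m\,x^{-1}f(u)^m\right)$, which after integration is $o(1)$ (and in fact $O(\epsilon/m)$ in its limiting value).

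The announced $\epsilon/m$ term originates solely from $u\le 1$. The first-order piece is $m\int_0^1 f(u)^{m-1}r(u,x)\,du$; its $f(u)O(x^{-1})$ part contributes $O\!\left(m\,x^{-1}\int_0^1 f^{m-1}\right)=o(1)$ once $x\to\infty$, and the essential part is $m\,\epsilon\int_0^1 u\,f(u)^{m-1}\,du$. Because $f(0)=1$ is the maximum on $[0,1]$ with $f'(0)=-\tfrac{2}{3\ln 2}<0$, Laplace's method shows that $f(u)^{m-1}$ concentrates near $u=0$ on the scale $u\asymp 1/m$; since the integrand carries the extra vanishing factor $u$, one obtains $\int_0^1 u\,f(u)^{m-1}\,du=O(1/m^2)$, whence $m\,\epsilon\int_0^1 u\,f(u)^{m-1}\,du=O(\epsilon/m)$, exactly the stated term.

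The main obstacle is controlling the higher-order binomial tail $\sum_{i\ge 2}\binom{m}{i}f^{m-i}r^i$ uniformly in $u$ and, crucially, uniformly in $m$. Using the relative bound $|r|\le c\,f(u)$ valid on $(0,1]$, the $i$-th term is at most $\binom{m}{i}c^i\int_0^1 u^i f(u)^{m-1}\,du$, and the same Laplace concentration supplies a factor $u^i$ worth of $O(m^{-(i+1)})$; combined with $\binom{m}{i}=O(m^i)$ this makes each term $O(\epsilon^i/m)$, so the whole tail is $O(\epsilon^2/m)$ and leaves only the $i=1$ contribution at the $\epsilon/m$ scale. The remaining bookkeeping is to verify that the $G\to G_2$ passage of the first step does not disturb this resolution, i.e.\ that the $O(x^{-1})$ discrepancy, once raised to the $m$-th power and integrated, stays $o(1)$ uniformly as $x\to\infty$; this uniform control across both $u$-regimes is where I expect the bulk of the technical effort.
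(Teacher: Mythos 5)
Your overall architecture matches the paper's: first control the discrepancy $G-G_2$ caused by omitting the $x/2^{2k}$ terms, then feed the asymptotics of $G_2$ (Lemma 4.2) into the integral, splitting at $u=1$ and exploiting that the error is of the additive form $u\epsilon$ on $[0,1]$ and of the relative form $\epsilon f(u)$ on $[1,\infty)$. Your handling of the integration stage is a legitimate alternative to the paper's: you expand $(f+r)^m$ binomially and use Laplace-type concentration of $f^{m-1}$ near $u=0$ (your value $f'(0)=-\tfrac{2}{3\ln 2}$ is correct) to get $\int_0^1 u\,f(u)^{m-1}du = O(1/m^2)$, hence an $O(\epsilon/m)$ first-order term and an $O(\epsilon^2/m)$ tail. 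The paper instead replaces $f$ by the explicit majorants $1-u/2$ on $[0,1]$ and $1/(1+u)$ on $[1,\infty)$, uses that $b\mapsto (b+\delta)^m-b^m$ is increasing in $b$, and integrates in closed form, obtaining $\frac{4\epsilon}{1+m}$ and $\frac{2^{1-m}((1+\epsilon)^m-1)}{m-1}$. The paper's route produces explicit constants (which is what supports the numerical bound on $\epsilon$ in the statement); yours gives the right order $O(\epsilon/m)$ but would need extra bookkeeping to recover a concrete constant.

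The genuine gap is in your first step, and it is exactly the part you defer as ``the bulk of the technical effort'': the claim $|G(mx,ux)-G_2(mx,ux)| = O(x^{-1})$ with an integrable tail in $u$ is asserted, not proved, and your proposed tool does not suffice as stated. The Lipschitz bound $|e^{-a}-e^{-b}|\le |a-b|$ gives a per-term discrepancy of order $x/2^{2k}$, which for small $k$ is enormous (for $k=1$ it is $x/2$); the smallness of the companion factor $e^{-3ux2^{-k}}-e^{-ux2^{-k}}$ does not rescue this near $u=0$, where it degenerates. What is actually needed is the damping $e^{-3x/2^k}$ against $e^{2x/2^{2k}}$ for $k<\tfrac12\log_2(2x)$, and then a uniform bound on the resulting harmonic sum for $k\ge\tfrac12\log_2(2x)$. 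This is precisely where the paper invests its technical machinery: an appendix monotonicity lemma (Lemma A.1) showing the small-$k$ block is $O\bigl(\log x\cdot e^{-\sqrt{x/2}\,(3+u)}\bigr)$, and an appendix Mellin-transform lemma (Lemma B.1) showing $\sum_{k\ge 1}\frac{x^2}{2^{2k}}e^{-3x/2^k}\bigl(e^{-ux2^{-k}}-e^{-3ux2^{-k}}\bigr) = g(u)\bigl(1+\epsilon+O(1/x)\bigr)$ with $g(u)=\frac{1}{\log 2}\left(\frac{1}{(3+u)^2}-\frac{1}{(3+3u)^2}\right)$ integrable in $u$ — which is what makes the ``$O(x^{-1})$ times an integrable function'' statement true and usable once raised to the $m$-th power and integrated. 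A mean-value refinement $|e^{-a}-e^{-b}|\le e^{-\min(a,b)}|a-b|$ together with a quantitative harmonic-sum estimate could substitute for the Mellin argument, but some such estimate must be supplied; without it your first reduction, and hence the whole proof, is incomplete.
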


\begin{proof}
The proof consists of two parts. First we will bound the contribution of the terms of the form $\frac{x}{2^{2k}}$ in $G(mx, ux)$. In other words we will bound the gap between $G(mx, ux)$ and $G_2(mx, ux)$. Second, we will bound the gap between $\int_0^\infty G(mx, ux)^m du$ and $\int_0^\infty f(u)^m du$.

We can write the gap between $G(mx, ux)$ and $G_2(mx, ux)$ as  

\begin{align*}
S &= \sum_{k = 1}^\infty \left[\left(e^{-\frac{3x}{2^{k}}} - e^{-x\left(\frac{3}{2^{k}} - \frac{2}{2^{2k}}\right)} \right) (e^{-3ux2^{-k}}-e^{-ux2^{-k}}) \right. \\
&\hspace{1cm} + \left. \left(e^{-x\left(\frac{4}{2^{k}} - \frac{4}{2^{2k}}\right)} - e^{-\frac{4x}{2^{k}}}\right) (e^{-3ux2^{-k}}-e^{-ux2^{-k}}) \right]\\
&= \sum_{k = 1}^\infty \left[e^{-\frac{3x}{2^k}}(1 - e^{\frac{2x}{2^{2k}}}) (e^{-3ux2^{-k}}-e^{-ux2^{-k}}) \right. \\
 &\hspace{1cm} \left. + e^{-\frac{4x}{2^k}}(e^{\frac{4x}{2^{2k}}} - 1) (e^{-3ux2^{-k}}-e^{-ux2^{-k}}) \right]\\
&\equiv S_1 + S_2
\end{align*}

We will start the analysis of $S_1$ by splitting it into two sub-sums.
\begin{align*}
    S_1 &= \sum_{k < \frac{1}{2}\log_2{2x}} e^{-\frac{3x}{2^k}}(e^{\frac{2x}{2^{2k}}} - 1) (e^{-ux2^{-k}}-e^{-3ux2^{-k}})\\
    & \hspace{1cm} + \sum_{k \geq \frac{1}{2}\log_2{2x}} e^{-\frac{3x}{2^k}}( e^{\frac{2x}{2^{2k}}} - 1) (e^{-ux2^{-k}}-e^{-3ux2^{-k}}) \\
    &\equiv S_{1,1} + S_{1,2}.
\end{align*}
For arbitrary fixed $x > 1$, $u \geq 0$ and $k \in [1,\frac{1}{2}\log_2{2x}]$, the functions $f(k) = e^{-\frac{3x}{2^k}}(e^{\frac{2x}{2^{2k}}} - 1)$ and $e^{-ux2^{-k}}$ are monotonically increasing (see Lemma \ref{bounding f(k) lemma} in the appendix). $1 - e^{-x} < 1$ for all $x \geq 0$. Thus, we can conclude that
\[S_{1,1} < \frac{1}{2}(e - 1)\log_2{(2x)} e^{-\sqrt{\frac{x}{2}}(3 + u)}\]
For $S_{1,2}$ we use the inequality $e^x - 1 < \frac{7x}{4}$ for $0 < x < 1$ to get
\begin{align*}
    S_{1,2} &\leq  \sum_{k \geq \frac{1}{2}\log_2{2x}} \frac{7}{2} \frac{x}{2^{2k}}e^{-\frac{3x}{2^k}} (e^{-ux2^{-k}}-e^{-3ux2^{-k}}) \\
    &\leq  \sum_{k \geq 1} \frac{7}{2} \frac{x}{2^{2k}}e^{-\frac{3x}{2^k}} (e^{-ux2^{-k}}-e^{-3ux2^{-k}}) \\
    &= \frac{7}{2x} \sum_{k \geq 1} \frac{x^2}{2^{2k}}e^{-\frac{3x}{2^k}} (e^{-ux2^{-k}}-e^{-3ux2^{-k}}).
\end{align*}
Using Lemma \ref{bounding s lemma} (see appendix) we get that
\begin{equation}
S_{1,2} \leq 
 \frac{7}{2x} g(u) \left(1 + \epsilon + O\left(\frac{1}{x} \right)\right)
\end{equation}
where $g(u) = \frac{1}{\log{2}} \left( \frac{1}{(3 + u)^2} - \frac{1}{(3 + 3u)^2} \right)$ and $|\epsilon| \leq 1.5 \cdot 10^{-5}$.

Note that for $u > 0$ and $x \rightarrow \infty$, $S_{1,1} < S_{1,2}$. The same calculation is valid for the analysis of $S_2$, and yields a smaller bound than what we got for $S_1$. Recall that $G(mx, ux) = G_2(mx, ux) + S$, using Lemma 4.2 we get
\begin{equation}
G(mx, ux) = 
\begin{cases}  
    f(u) + u\epsilon + O\left(\frac{1}{x} \right)(f(u) + g(u)),& u \leq 1, \\
    f(u)(1 + \epsilon) + O\left(\frac{1}{x} \right)(f(u) + g(u)),& u > 1.
\end{cases}    
\end{equation}

Showing the close relation between $G(mx, ux)$ and $f(u)$ is not enough. We also need to bound the difference: 
\begin{equation}
\left|\int_0^\infty G(mx, ux)^m du - \int_0^\infty f(u)^m du \right|.
\end{equation}

The steps here are the same as in \cite{Flajolet2007HyperLogLogTA}, except for the auxiliary functions (the functions that we will use to bound $f(u)$).
\begin{equation} \label{eq:25}
\begin{split}
    \biggl| \int_0^\infty G(mx,& ux)^m du - \int_0^\infty f(u)^m du \biggr| \\
    & \leq \left| \int_0^1 (G(mx, ux)^m  - f(u)^m ) du \right| \\
    & \hspace{0.5cm} + \left| \int_1^\infty (G(mx, ux)^m  - f(u)^m ) du \right|.
\end{split}
\end{equation}
We start by bounding the first term of the right-hand side of (\ref{eq:25}). We will use the easily proved fact that $1 - u/2 > f(u)$ for $u \in [0,1]$.
\begin{align*}
    \biggl| \int_0^1 (G(mx,& ux)^m  - f(u)^m )du \biggr| \\
    &\leq \left| \int_0^1 ((f(u) + u\epsilon)^m  - f(u)^m )du \right| + o(1) \\
    & \leq \left| \int_0^1 ((1 - u/2 + u\epsilon)^m  - (1 - u/2)^m )du \right| + o(1) \\
    &= \frac{4\epsilon}{1 + m} + o(1)
\end{align*}
For the second term we will use the easily proved fact that $f(u) < 1/(1+u)$ for $u \in [1, \infty)$.
\begin{align*}
    \biggl| \int_1^\infty (G(mx,& ux)^m  - f(u)^m )du \biggr|\\
    &\leq \left| \int_1^\infty ((f(u)(1 + \epsilon))^m  - f(u)^m )du \right| + o(1) \\
    & \leq \left| \int_1^\infty \left(\left(\frac{1 + \epsilon}{1 + u}\right)^m  - \left(\frac{1}{1 + u}\right)^m \right)du  \right| + o(1) \\
    &= \frac{2^{1 - m}(-1 + (1 +\epsilon)^m)}{m - 1} + o(1)
\end{align*}
The gap between the integrals completes the proof. 
\end{proof}

Using Lemma 4.3 together with (\ref{eq:13}) and (\ref{eq:7}) we concludes theorem 4.1.

\subsection{Variance Analysis}
The variance analysis helps to determine the accuracy of the algorithm. The exact expression for the second moment is
\begin{equation}
\begin{split}
    \mathbb{E}[Y^2] &= \sum_{x \in \{0, 1\}^m} \sum_{k_1,...,k_m \geq 1} \left[\left(\frac{1}{\sum_{j = 1}^m 2^{-k_j} + (1 - x_j)2^{-k_j + 1}}\right)^2 \right. \\
    & \hspace{2cm} \left.\sum_{n_1 + ...+n_m = n} \binom{n}{n_1, ...,n_m} \frac{1}{m^n} \prod_{j = 1}^m \phi_{n_j,k_j, x_j} \right],
\end{split}
\end{equation}
where $\phi_{n_j,k_j, x_j}$ is as in (\ref{eq: phi}).

Using the same steps as in the analysis of the expectation, with small modifications, we get the following asymptotic behavior as $x \rightarrow \infty$
\begin{equation} \label{eq:var}
Var_{\mathcal{P}(x)}(Y) = x^2 \left(\int_0^\infty u f(u)^m du - \left( \int_0^\infty f(u)^m du \right)^2 +\frac{\epsilon}{m} + o(1) \right),
\end{equation}
where $f(u)$ is as in theorem 4.1 and $\epsilon$ is very small.
Using Proposition 3 from \cite{Flajolet2007HyperLogLogTA} we get
\begin{equation} \label{eq:28}
    Var_n(Y) = Var_{\mathcal{P}(n)}(Y) + O(n).
\end{equation}

\section{Approximation of the Integrals}
In order to fined the constant $\gamma_m$ for the bias correction of $Y$, we have to get a closed form for the approximation of the integrals. 
\subsection{Laplace Integrals}
A Laplace integral \cite{bdab8eeda7074e4fab57b62ac70d4f6d} has the form
\begin{equation}
    I(m) = \frac{1}{m} \int_a^b \frac{f(u)}{\phi'(u)}\cdot \frac{d}{du} \left( e^{x\phi(u)}\right) du,
\end{equation}
where $m > 0$. We are interested in the asymptotic behaviour of $I(m)$ as $m \rightarrow \infty$. Thus, we will use Laplace's method \cite{bdab8eeda7074e4fab57b62ac70d4f6d}, which gives us terms from the asymptotic expansion of $I(m)$.
Denote
\begin{equation}
  \phi(u) = \log{f(u)} = \log{\left(\log_2{\frac{2 + u}{1 + u}} + \log_2{\frac{4 + 3u}{3 + 3u}} - \log_2{\frac{4 + u}{3 + u}}\right)}  
\end{equation}

Using the form of Laplace integrals we have
\begin{equation}
  I_0(m) = \int_0^\infty e^{m\phi(u)} du  
\end{equation}

\begin{equation}
  I_1(m) = \int_0^\infty ue^{m\phi(u)} du  
\end{equation}

The integral fulfills the requirements for getting the asymptotic analysis when $m$ is large enough as described in Laplace's method \cite{bdab8eeda7074e4fab57b62ac70d4f6d}. It suffices to get the first and second leading terms.
\begin{equation} \label{eq:34}
I_0(m) = \frac{\log{8}}{2m} \left(1 + \frac{1}{m}\left(\frac{41\log{2}}{16} - 1\right) + O(m^{-2})\right)
\end{equation}

\begin{equation} \label{eq:35}
I_1(m) = \left(\frac{\log{8}}{2m}\right)^2 \left(1 + \frac{3}{m}\left(\frac{41\log{2}}{16} - 1\right) + O(m^{-2})\right)
\end{equation}

\subsection{Getting the Unbiased Estimator}
In this section we will extract from the previous results the bias correction constant, which will give us the unbiased estimator. Define
\begin{equation}
\gamma_m = \frac{1}{m I_0(m)}, \hspace{1cm} \beta_m = m\left(\frac{I_1(m)}{I_0(m)^2} - 1\right).
\end{equation}
Using (\ref{eq:theorem4}) combined with the approximation that we achieved from Laplace's method (\ref{eq:34}), we get
\begin{equation}
\mathbb{E}[Y] \approx \frac{n}{\gamma_m m^2}
\end{equation}
Note that the variance of $Y$ is not the variance of the estimation. Let $\theta$ be the output of the algorithm EHLL. In the same way as we did for the expectation, using (\ref{eq:var}) and (\ref{eq:35}), we get 
\begin{equation}
\frac{Var[\theta]}{n^2} \approx \frac{\beta_m}{m}
\end{equation}
Note that when $m$ is very large we get the constants
\begin{equation} \label{eq: betta}
\gamma = \frac{2}{3\log{2}} \approx 0.962, \hspace{1cm} \beta = \frac{41\log{2}}{16} - 1 \approx 0.776.
\end{equation}

\section{Martingale Setting}
In the last decade, the martingale approach has been developed \cite{DBLP:conf/kdd/Ting14, Cohen}. The key observation was that the mergeable approach throws away all the information about the sequence of the sketches. The martingale transform can be applied to any sketch that doesn't affect by duplicates. The sketch augmented with a counter. Let $P(S)$ be the probability that the sketch will be updated. When a new element update the sketch, the counter receives the addition of $1/P(S)$.

Let $\hat{N}_t$ be the value of the counter after $t$ distinct elements arrived. Note that $\hat{N}_t - t$ is a martingale \cite{DBLP:conf/kdd/Ting14}(a martingale is a sequence of random variables, $X_1, X_2,...$, that satisfies $\mathbb{E}[|X_n|] < \infty$ and $\mathbb{E}[X_{n + 1}|X_1,...,X_n] = X_n$).  It has been shown \cite{pettie2021nonmergeable} that the martingale transform is an optimal estimator, when using a sketch in the non-mergeable way.

The martingale transform analysed by \cite{pettie2021nonmergeable} in the following way. Let $\{S_i\}_{i = 0}^n$ be a sequence of sketches, where $S_i$ is the sketch after seeing $i$ distinct elements. Denote as 
\begin{equation}
    E_n = \sum_{i = 1}^n \mathbbm{1}[S_i \neq S_{i - 1}] \cdot \frac{1}{P(S_i)},
\end{equation} 
the martingale estimation, and as 
\begin{equation}
    V_n = \sum_{i = 1}^n \mathbbm{1}[S_i \neq S_{i - 1}] \cdot \frac{1 - P(S_i)}{P^2(S_i)}
\end{equation}
the retrospective variance (it named for "retrospective", because it is an estimation of the variance). They proved that 
\begin{equation} \label{eq:41}
    \mathbb{E}[E_n] = n, \text{ and } Var(E_n) = \mathbb{E}[V_n]= \sum_{i = 1}^n \mathbb{E}\left[\frac{1}{P(S_i)}\right] - n
\end{equation}

Note that $\frac{1}{P(S_i)}$ is the indicator used in HLL and EHLL to estimate the cardinality. Thus, using (\ref{eq:41}) with the analysis of HLL \cite{Flajolet2007HyperLogLogTA, DBLP:conf/kdd/Ting14} and the analysis of EHLL (Sections 4,5), we get 
\begin{equation}
\frac{Var(E^{HLL}_n)}{n^2} =\frac{1}{n^2}\left( \frac{1}{\alpha m} \frac{n(n + 1)}{2} - n \right)\approx \frac{0.69}{m}
\end{equation}
\begin{equation} \label{eq: var_ehll}
\frac{Var(E^{EHLL}_n)}{n^2} = \frac{1}{n^2}\left(\frac{1}{\gamma m} \frac{n(n + 1)}{2} - n \right) \approx \frac{0.52}{m}
\end{equation}

\newpage

\section{Simulations}
We ran $25,000$ streams with $10^6$ distinct elements and estimated the cardinality at equal intervals. We set $m = 2^{10}$ for EHLL, Martingale EHLL and for EHLL-TC. For HLL and Martingale HLL we set $m = \left \lceil\frac{7}{6} \cdot 2^{10}\right \rceil$, and for HLL-TC $m = \frac{5}{4} \cdot 2^{10}$. So the compared sketches use the same amount of memory.

The results, described in Figure 1, match the theory and the constants (\ref{eq: betta}) and (\ref{eq: var_ehll}). It should be pointed out that EHLL needs to use $\frac{7}{6} \left(\frac{0.776}{1.082}\right) \approx 0.837$ times the space of HLL to get the same accuracy. Martingale EHLL needs $\frac{7}{6} \left(\frac{0.52}{0.69}\right) \approx 0.88$ times the space of Martingale HLL and EHLL-TC needs $\frac{5}{4} \left(\frac{0.776}{1.082}\right) \approx 0.896$ times the space of HLL-TC. 

\begin{figure*}
\centering
\begin{subfigure}[b]{0.33\textwidth}
  \centering
  \includegraphics[width=\textwidth]{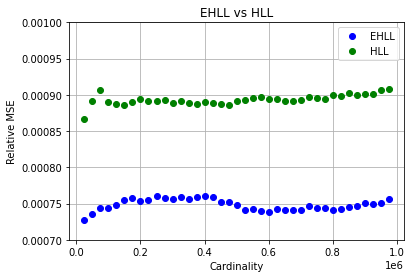}
  \caption{EHLL vs HLL}
  \label{fig:sub1}
\end{subfigure}%
\begin{subfigure}[b]{0.33\textwidth}
  \centering
  \includegraphics[width=\textwidth]{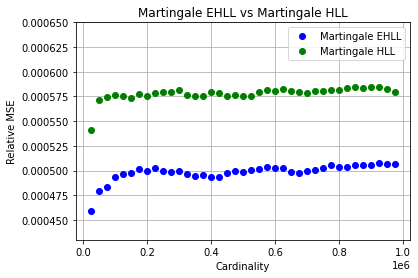}
  \caption{Martingale EHLL vs Martingale HLL}
  \label{fig:sub2}
\end{subfigure}
\begin{subfigure}[b]{0.33\textwidth}
  \centering
  \includegraphics[width=\textwidth]{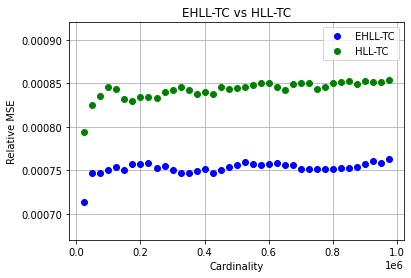}
  \caption{EHLL-TC vs HLL-TC}
  \label{fig:sub3}
\end{subfigure}
\caption{Relative MSE plots. The sketches in each graph use the same amount of memory. The variants of EHLL beat the corresponding HLL based algorithms.}
\label{fig:test}
\end{figure*}
\section{Discussion and Future Work}
We have showed here an improvement to the classic HyperLogLog algorithm. That is valid in the mergeable setting and in the sequential setting. The algorithm runs fast and gives great improvements in memory cost.

Another interesting question that arises is the space consumption against the CPU consumption. To be specific, what is the maximal number of operations per sketch update, that is still worth the space reduction. This question arises because, in many applications, the bottleneck is actually the CPU consumption.

\section*{Acknowledgement}
To Daniel Berend and Edo Liberty, for valuable discussions.

\nocite{}
\bibliography{ref}
\bibliographystyle{plain}

\appendix

\section{}
\begin{lemma} \label{bounding f(k) lemma}
Let $x > 1$, and define $f(k)$ as
\begin{equation}
f(k) = e^{-\frac{3x}{2^k}} (e^{\frac{2x}{2^{2k}}} - 1)
\end{equation}
For $k \in [1, \frac{1}{2}\log_2(2x)]$, the function $f(k)$ is monotonically increasing.
\end{lemma}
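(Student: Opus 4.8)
The plan is to eliminate the base-$2$ exponentials through the substitution $w = 2^{-k}$, turning the discrete-looking claim into a monotonicity question for a smooth function of a real variable. Since $w$ is strictly decreasing in $k$, showing that $f(k)$ increases in $k$ is equivalent to showing that $g(w) := e^{-3xw}\bigl(e^{2xw^2}-1\bigr)$ \emph{decreases} in $w$, and the range $k \in [1,\tfrac12\log_2(2x)]$ becomes $w \in [\tfrac{1}{\sqrt{2x}}, \tfrac12]$. Here the right endpoint $w=\tfrac12$ corresponds to $k=1$ and the left endpoint $w=\tfrac{1}{\sqrt{2x}}$ corresponds to $k=\tfrac12\log_2(2x)$; note that the interval is genuinely nonempty only once $x\ge 2$, and that at the left endpoint $2xw^2 = 1$, the feature that keeps the later computation clean.

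First I would differentiate. A direct calculation gives
\[
g'(w) = x\,e^{-3xw}\bigl(3 - (3-4w)\,e^{2xw^2}\bigr),
\]
so, since $x\,e^{-3xw} > 0$ and $3-4w \ge 1 > 0$ on the interval (because $w \le \tfrac12$), the condition $g'(w) < 0$ reduces to the single inequality $h(w) > 3$, where $h(w) := (3-4w)\,e^{2xw^2}$. Thus everything comes down to bounding $h$ below by $3$ on $[\tfrac{1}{\sqrt{2x}},\tfrac12]$.

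Next I would show that $h$ attains its minimum over the interval at the left endpoint $w=\tfrac{1}{\sqrt{2x}}$. Differentiating,
\[
h'(w) = 4\,e^{2xw^2}\bigl(xw(3-4w) - 1\bigr),
\]
so the sign of $h'$ is governed by the concave quadratic $\psi(w) := xw(3-4w) - 1$. Because $\psi$ is concave, it lies above the chord joining its endpoint values and hence exceeds $\min(\psi(\tfrac1{\sqrt{2x}}),\psi(\tfrac12))$ throughout the interval; evaluating $\psi(\tfrac12) = \tfrac{x}{2}-1$ and $\psi(\tfrac{1}{\sqrt{2x}}) = \tfrac{3}{\sqrt2}\sqrt{x}-3$ shows both are positive for $x$ large, so $h'>0$ and $h$ is increasing. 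It then remains to evaluate $h$ at the minimizing endpoint, where $2xw^2=1$, giving $h(\tfrac{1}{\sqrt{2x}}) = e\bigl(3 - \tfrac{4}{\sqrt{2x}}\bigr)$, and to verify this exceeds $3$.

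The main obstacle is precisely this final boundary estimate. The worst case sits at $k=\tfrac12\log_2(2x)$, where the factor $(3-4w)$ is largest but the exponential has shrunk to $e$; the product $e\bigl(3-\tfrac{4}{\sqrt{2x}}\bigr)$ clears the threshold $3$ only for $x$ large enough, tending to $3e>3$ as $x\to\infty$. All the asymptotic force of the lemma is concentrated here, while the interior is comfortably handled by the concavity argument and the $k=1$ endpoint is never binding. Since the surrounding analysis invokes this lemma only in the regime $x\to\infty$, establishing the inequality for $x$ sufficiently large is exactly what is needed.
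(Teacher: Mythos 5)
Your proof is correct, and it takes a genuinely different route from the paper's. The paper substitutes $a = k/\log_2(2x)$ and proves $f'(a)\geq 0$ by a single pointwise bound: inside the derivative it linearizes the exponential via $e^y \geq 1+y$ and checks the resulting algebraic expression is positive, at the cost of assuming $(2x)^{1/2}>4$, i.e.\ $x>8$. You instead substitute $w=2^{-k}$, reduce the sign of the derivative to the inequality $h(w)=(3-4w)e^{2xw^2}>3$, and then run a \emph{second} monotonicity argument on $h$ itself (concavity of $\psi(w)=xw(3-4w)-1$ plus two endpoint evaluations), which pushes the whole question to the single worst-case point $w=1/\sqrt{2x}$, where $2xw^2=1$ gives the clean value $e\left(3-4/\sqrt{2x}\right)$. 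What your route buys: it identifies exactly where the inequality is tightest, and solving the endpoint inequality explicitly shows validity once $\sqrt{2x} > \frac{4e}{3(e-1)}$, i.e.\ $x > \frac{8e^2}{9(e-1)^2}\approx 2.23$ (your $\psi$-positivity claims already hold for $x>2$), a strictly wider range than the paper's $x>8$; what the paper's route buys is brevity --- one bound, no auxiliary function. Two small points. First, replace ``for $x$ sufficiently large'' by this explicit threshold; it is a one-line computation and your argument deserves the sharper statement. Second, the large-$x$ restriction is not a weakness of your proof but of the lemma itself: as stated for all $x>1$ it is actually false in the narrow window just above $x=2$ --- for instance at $x=2.1$ the interval $[1,\tfrac{1}{2}\log_2(2x)]$ is nonempty, yet $h(\tfrac{1}{2})=e^{x/2}<3$, so $f$ is \emph{decreasing} at $k=1$. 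The paper hides the same issue inside its ``reasonable assumption'' $(2x)^{1/2}>4$; since the lemma is only invoked in the regime $x\to\infty$, both proofs serve the purpose, but yours covers more of the admissible range.
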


\begin{proof}
Let $a = \frac{k}{\log_2(2x)}$ so that
\begin{equation}
f(a) = e^{-\frac{3x}{(2x)^a}} (e^{\frac{2x}{(2x)^{2a}}} - 1)
\end{equation}
We will show that $f(a)$ is monotonically increasing in the interval $[\frac{1}{\log_2(2x)}, \frac{1}{2}]$. The idea is to show that $f'(a) \geq 0$ in the whole range.
\begin{align*}
    f'(a) &= e^{-\frac{3}{2} (2x)^{1 - a} + (2x)^{1 - 2a}} (-2 (2x)^{1 - 2a} \log(2x)) \\
    &\hspace{1cm} + \frac{3}{2}e^{-\frac{3}{2} (2x)^{1 - a}}(e^{(2x)^{1 - 2a}} - 1)(2x)^{1 - a} \log(2x) \\
    &= \log(2x)(2x)^{1 - a} e^{-\frac{3}{2} (2x)^{1 - a}} \left(e^{(2x)^{1 - 2a}} \left(\frac{3}{2} - 2(2x)^{-a} \right) - \frac{3}{2}\right) 
\end{align*}
The first terms are always positive. We can bound the second term using the inequality $1 +x \leq e^x$ and the fact that $(2x)^{-1/\log_2(2x)} = \frac{1}{2}$.
\begin{align*}
    e^{(2x)^{1 - 2a}} \left(\frac{3}{2} - 2(2x)^{-a} \right) - \frac{3}{2}) &\geq (1 + (2x)^{1 - 2a}) \left(\frac{3}{2} - 2(2x)^{-a} \right) - \frac{3}{2} \\
    &= (2x)^{-a}\left[(2x)^{1 - a}\left(\frac{3}{2} - 2(2x)^{-a} \right) - 2 \right] \\
    &\geq (2x)^{-a}\left(\frac{(2x)^{1 - a}}{2} - 2 \right) \\
    & \geq (2x)^{-a}\left(\frac{(2x)^{1/2}}{2} - 2 \right) > 0
\end{align*}
In the last step we used the reasonable assumption that $(2x)^{1/2} > 4$.

\end{proof}

\section{}
\begin{lemma} \label{bounding s lemma}
Let 
\begin{equation}
h(x) = \sum_{k \geq 1} \frac{x^2}{2^{2k}}e^{-\frac{3x}{2^k}} (e^{-ux2^{-k}}-e^{-3ux2^{-k}})
\end{equation}
For every $u > 0$, the function $h(x)$ has the following asymptotic behaviour as $x \rightarrow \infty$
\begin{equation}
h(x) = g(u) \left(1 + \epsilon + O\left(\frac{1}{x} \right)\right)
\end{equation}
Where $g(u) = \frac{1}{\log(2)} \left( \frac{1}{(3 + u)^2} - \frac{1}{(3 + 3u)^2} \right)$ and $|\epsilon| \leq 1.5 \cdot 10^{-5}$
\end{lemma}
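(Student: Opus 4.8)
The plan is to treat $h(x)$ as a harmonic sum and run the Mellin-transform machinery exactly as in the proof of Lemma~1 of \cite{Flajolet2007HyperLogLogTA}. First I would write $h(x) = \sum_{k\ge 1}\psi(2^{-k}x)$ with base function $\psi(t) = t^2\bigl(e^{-(3+u)t} - e^{-(3+3u)t}\bigr)$, where $u>0$ is a fixed parameter; thus the amplitudes are $\lambda_k = 1$ and the frequencies $\mu_k = 2^{-k}$. Computing the Mellin transform term by term from $\int_0^\infty e^{-at}t^{s+1}\,dt = \Gamma(s+2)\,a^{-(s+2)}$ gives
\[
\psi^*(s) = \Gamma(s+2)\Bigl[(3+u)^{-(s+2)} - (3+3u)^{-(s+2)}\Bigr].
\]
Since $\psi(t) = O(t^3)$ as $t\to 0$, the fundamental strip is $\langle -3,\infty\rangle$; note that the pole of $\Gamma(s+2)$ at $s=-2$ is cancelled by the zero of the bracket there.

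Next I would apply the harmonic-sum factorization $F^*(s) = \bigl(\sum_k \lambda_k\mu_k^{-s}\bigr)\psi^*(s)$. As $\sum_{k\ge 1}2^{ks} = 2^s/(1-2^s)$ for $\Re(s)<0$, this yields
\[
h^*(s) = \frac{2^s}{1-2^s}\,\psi^*(s), \qquad -3 < \Re(s) < 0 .
\]
Inverting and shifting the contour to the right --- the direction producing the $x\to\infty$ expansion, since $|x^{-s}| = x^{-\Re(s)}$ decays for $\Re(s)>0$ --- I would collect the residues on the line $\Re(s)=0$. The factor $2^s/(1-2^s)$ has a simple pole at $s=0$ with residue $-1/\log 2$, and $\psi^*(0) = (3+u)^{-2}-(3+3u)^{-2}$, so this pole contributes the leading term
\[
\frac{1}{\log 2}\left(\frac{1}{(3+u)^2} - \frac{1}{(3+3u)^2}\right) = g(u),
\]
as claimed.

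The remaining poles on the critical line sit at $s=\chi_k := 2\pi i k/\log 2$, $k\neq 0$, each again with residue $-1/\log 2$ in the first factor; together they give the oscillatory term $\tfrac{1}{\log 2}\sum_{k\neq 0}\psi^*(\chi_k)\,x^{-\chi_k}$, which is absorbed into the relative error $\epsilon$. Pushing the contour further to $\Re(s)=1$ --- permissible because $h^*$ is analytic in $0<\Re(s)\le 1$, and the $\Gamma$-factor makes it decay rapidly along vertical lines (which also justifies the contour shift and the convergence of the residue series) --- bounds the remainder integral by $O(x^{-1})$, giving the $O(1/x)$ factor.

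The main obstacle is the quantitative bound $|\epsilon|\le 1.5\cdot 10^{-5}$ \emph{uniformly in} $u$. Writing $\epsilon = \psi^*(0)^{-1}\sum_{k\neq 0}\psi^*(\chi_k)\,x^{-\chi_k}$, one has $|\epsilon|\le \psi^*(0)^{-1}\sum_{k\neq 0}|\psi^*(\chi_k)|$, and by Stirling $|\Gamma(\chi_k+2)|$ decays like $e^{-\pi^2|k|/\log 2}$, so the $k$-sum converges extremely fast. The delicate point is uniformity: as $u\to 0$ both $\psi^*(0)=(3+u)^{-2}-(3+3u)^{-2}$ and each $\psi^*(\chi_k)$ vanish. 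I would therefore split into regimes. Setting $A=3+u$, $B=3+3u$, for $u$ bounded away from $0$ the crude estimate $|\psi^*(\chi_k)|\le |\Gamma(\chi_k+2)|\,(A^{-2}+B^{-2})$ together with $\psi^*(0)=A^{-2}-B^{-2}$ keeps the ratio bounded; for small $u$ a first-order expansion in $B-A=2u$ shows numerator and denominator both scale like $2u\,A^{-3}$, so the ratio stays $\le C\,|\chi_k+2|$ for an absolute constant $C$, which is still killed by the $\Gamma$-decay. Evaluating $\sum_{k\neq 0}|\Gamma(\chi_k+2)|$ numerically then produces the stated constant, mirroring the corresponding estimate in \cite{Flajolet2007HyperLogLogTA}.
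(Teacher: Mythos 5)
Your proposal is correct and follows essentially the same route as the paper's proof: write $h$ as a harmonic sum, factorize its Mellin transform as $\frac{2^s}{1-2^s}\Gamma(s+2)\bigl[(3+u)^{-(s+2)}-(3+3u)^{-(s+2)}\bigr]$, shift the contour to $\Re(s)=1$, and read off the leading term $g(u)$ from the pole at $s=0$, the relative error $\epsilon$ from the poles at $2\pi i k/\log 2$, and the $O(1/x)$ term from the remainder integral. The only difference is in how uniformity in $u$ is handled: where you split into regimes ($u$ bounded away from $0$ versus small $u$ with a first-order expansion), the paper uses the single inequality $|(3+u)^{-s-2}-(3+3u)^{-s-2}|\le |s+2|\left(\frac{1}{(3+u)^2}-\frac{1}{(3+3u)^2}\right)$ for $\Re(s)\ge 0$, obtained by writing the difference as an integral of the derivative --- the same idea as your expansion, but valid for all $u>0$ at once.
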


The proof has the exact structure of Lemma 1 from \cite{Flajolet2007HyperLogLogTA}, but for completeness we write it for our case. 
\begin{proof}
$h(x)$ is an harmonic sum, and
\begin{equation}
h(x) = \sum_{k = 1}^{\infty} q(\mu_k x)
\end{equation}
where $q(t) = t^2(e^{-t(3 + u)} - e^{-t(3 + 3u)})$ and $\mu_k = \frac{1}{2^k}$.
Thus, its Mellin transform factorizes, in the fundamental strip $\langle -3, 0 \rangle$, as
\begin{equation}
h^{\star}(s) = \left (\sum_{k = 1}^{\infty} 2^{ks} \right)q^{\star}(s) = \frac{2^s \Gamma(s + 2)}{1 - 2^s} \left( \frac{1}{(3u + 3)^{s + 2}} - \frac{1}{(u + 3)^{s + 2}}\right).
\end{equation}
By Mellin inversion theorem, we have that
\begin{equation}
h(x) = \frac{1}{2 i \pi} \int_{c - i \infty}^{c + i \infty} h^{\star}(s) x^{-s} ds
\end{equation}
for every $c \in \langle -3, 0 \rangle$. $h^{\star}(s)$ has poles at $\mathbb{Z}_{<0}$, because of the $\Gamma$ function, and at the complex values $\{\eta_k = \frac{2 i k \pi}{\log{2}}, k \in \mathbb{Z}\}$. Using the residue theorem we get
\begin{equation}
h(x) = \sum_{k \in \mathbb{Z}}\text{Res}(h^{\star}(s) x^{-s}, \eta_k) - \frac{1}{2i \pi} \int_{1 - i \infty}^{1 + i \infty} h^{\star}(s) x^{-s} ds.
\end{equation}

The residues are
\begin{equation}
\begin{cases}
    \text{Res}(h^{\star}(s) x^{-s}, \eta_k) = \frac{1}{\log{2}} x^{-\eta_k} \Gamma(\eta_k) ((3 + u)^{-\eta_k - 2}\\  \hspace{4cm}- (3 + 3u)^{-\eta_k - 2}),& k \in \mathbb{Z}_{\neq 0}, \\
    \text{Res}(h^{\star}(s) x^{-s}, 0) = \frac{1}{\log{2}} \left( \frac{1}{(3 + u)^2} - \frac{1}{(3 + 3u)^2} \right),& k = 0.
\end{cases}
\end{equation}
We will use the following inequality. When $\Re(s) \geq 0$
\begin{equation} \label{eq: 53}
    |(3 + u)^{-s-2} - (3 + 3u)^{-s-2}| \leq |s + 2| \left(\frac{1}{(3 + u)^2} - \frac{1}{(3 + 3u)^2} \right).
\end{equation}
The inequality (\ref{eq: 53}) can be verified in the same way as suggested in \cite{Flajolet2007HyperLogLogTA} (we wrote the differences as the integral of the derivative, with the relevant bounds, and bound the derivative).
And the fact that 
\begin{equation}
\sum_{k \in \mathbb{Z}_{\neq 0}} |(k + 2) \Gamma(k)| \leq 1.65 \cdot 10^{-6}
\end{equation}
to get 
\begin{equation}
\left|\sum_{k \in \mathbb{Z}_{\neq 0}}\text{Res}(h^{\star}(s) x^{-s}, \eta_k)\right| \leq
 \frac{\epsilon}{\log{2}}  \left(\frac{1}{(3 + u)^2} - \frac{1}{(3 + 3u)^2} \right)
\end{equation}
where $\epsilon = 1.5 \cdot 10^{-5}$.
With the bound of the integral
\begin{align*}
\left|\int_{1 - i \infty}^{1 + i \infty} h^{\star}(s) x^{-s} ds \right| &\leq 
\frac{1}{x} \frac{1}{\log{2}}\left(\frac{1}{(3 + u)^2} - \frac{1}{(3 + 3u)^2} \right) \\
& \hspace{1cm} \cdot \int_{1 - i \infty}^{1 + i \infty}|(s + 2) \Gamma(s + 2)| d|s| \\
&= O\left(\frac{1}{x} \right) \frac{1}{\log{2}} \left(\frac{1}{(3 + u)^2} - \frac{1}{(3 + 3u)^2} \right)
\end{align*}
we complete the proof.

\end{proof}

\end{document}